\theoremstyle{plain}
\newtheorem{observation}[theorem]{Observation}
\title{Feedback Vertex Set on Geometric Intersection Graphs%
}
\titlerunning{Feedback Vertex Set on Geometric Intersection Graphs
} 
\author{
	Shinwoo An}{POSTECH}{shinwooan@postech.ac.kr}{}{}
\author{Eunjin Oh}{POSTECH}{eunjin.oh@postech.ac.kr}{}{}
\authorrunning{
	S. An and E. Oh} 
\subjclass{I.3.5 Computational Geometry and Object
	Modeling}
\keywords{Feedback vertex set, intersection graphs, parameterized algorithm}
\newcommand{\contract}[1]{{#1}_{\mathcal{P}}}
\newcommand{\contractt}[1]{\bar{{#1}}_{\mathcal{P}}}
\newcommand{\contractb}[1]{{#1}_{\mathcal{B}}}
\newcommand{\ver}[2]{\Vert {#1}{#2}\Vert}
\newcommand{\fvs}{\textsc{Feedback Vertex Set }}
\DeclareMathOperator{\poly}{poly}
\begin{document}
	\maketitle

	\begin{abstract}
		In this paper, we present an algorithm for computing a feedback vertex set
		of a unit disk graph of size $k$, if it exists, which runs in time 
		$2^{O(\sqrt{k})}(n+m)$, where $n$ and $m$ denote the numbers of vertices and edges, respectively. 
		This improves the $2^{O(\sqrt{k}\log k)}n^{O(1)}$-time algorithm
		for this problem on unit disk graphs by Fomin et al. [ICALP 2017]. 
		Moreover, our algorithm is optimal assuming the exponential-time hypothesis.  
		Also, our algorithm can be extended to handle geometric intersection graphs
		of similarly sized fat objects without increasing the running time. 
	\end{abstract}
	
	\section{Introduction}
	%
	%
	The \fvs problem is a classic and fundamental graph problem, which is one of Karp's 21 NP-complete problems. 
	Given an undirected graph $G=(V,E)$ and an integer $k$, the goal is to find
	a set $S$ of vertices of size $k$ such that every cycle of $G$ contains at least
	one vertex of $S$. In other words, this problem asks to find a set $S$ of vertices
	of size $k$ whose removal from $G$ results in a forest. 
	This problem has been studied extensively from the viewpoint of exact exponential-time algorithms~\cite{DBLP:conf/iwpec/FominGP06}, parameterized algorithms~\cite{cao2015feedback, KOCIUMAKA2014556}, and approximation algorithms~\cite{BECKER1996167}. 
	
	In this paper, we study the \fvs problem from the viewpoint of parameterized
	algorithms. When the parameter is the size $k$ of a feedback vertex set, 
	the best known parameterized algorithm takes $3.62^kn^{O(1)}$ time~\cite{KOCIUMAKA2014556}.
	On the other hand, it is known that no algorithm for \fvs runs in $2^{o(k)}n^{O(1)}$ time assuming the exponential-time hypothesis (ETH)~\cite{PBook}.
	For special classes of graphs such as planar graphs and $H$-minor-free graphs for any fixed $H$, there are $2^{O(\sqrt{k})}n^{O(1)}$-time 
	algorithms for the \fvs problem~\cite{demaine2005subexponential}.
	Moreover, for planar graphs, \fvs admits a linear kernel~\cite{BONAMY201625}.
	
	We present a subexponential-time algorithm for \fvs on \emph{geometric intersection graphs},
	which can be considered as a natural generalization of planar graphs. 
	Consider a set $F$ of geometric objects (for example, disks and polygons) in the plane. The \emph{intersection graph} $G[F]$ 
	for $F$ is defined as the undirected graph whose  
	vertices correspond to the objects in $F$ such that  
	two vertices are connected by an edge if and only if the two objects 
	corresponding to them intersect. 
	In the case that $F$ is a set of disks, its intersection graph is called
	a \emph{disk graph}, which has been studied extensively for various algorithmic 
	problems~\cite{cabello2015shortest,chan2019approximate,kaplan2018routing}. 
	It can be used as a model for broadcast networks: The disks of $F$ represent transmitter-receiver stations with the same
	transmission power.  
	A planar graph can be represented as a disk graph, and thus the class of disk 
	graphs is a generalization of the class of planar graphs. 
	
	\subparagraph{Previous Work.}
	Prior to our work, the best known algorithm for \fvs parameterized by the size $k$ of a feedback vertex set on unit 
	disk graphs takes $2^{O(\sqrt{k} \log k)}n^{O(1)}$ time~\cite{fomin_et_al:LIPIcs:2017:7393,DBLP:conf/soda/FominLS12}.
	Since the best known lower bound on the computation time is 
	$2^{o(\sqrt{n})}$ assuming the exponential-time hypothesis (ETH)~\cite{de2020framework}, it is a natural question if
	\fvs on unit disk graphs can be solved optimally. 
	De Berg et al.~\cite{de2020framework} presented an (non-parameterized) ETH-tight
	algorithm for this problem, which runs in $2^{O(\sqrt{n})}$ time. 
	However, it was not known if there is an ETH-tight parameterized algorithm
	for \fvs on unit disk graphs.
	
	Recently, several NP-complete problems have been studied for unit disk graphs (and geometric intersection graphs) from the viewpoint of parameterized algorithms, for example, the \textsc{Steiner Tree}, \textsc{Feedback Vertex Set}, \textsc{Vertex Cover}, \textsc{$k$-Path} and \textsc{Cycle Packing} problems~\cite{bhore_et_al:LIPIcs:2020:12260,fomin_et_al:LIPIcs:2017:7393,10.1007/11847250_14}.  
	In the case of \textsc{Vertex Cover}, the work by de Berg et al.~\cite{de2020framework} implies an ETH-tight
	parameterized algorithm. Also, in the case of \textsc{$k$-Path} problem, 
	Fomin et al.~\cite{DBLP:conf/compgeom/FominLP0Z20} presented an ETH-tight parameterized algorithm which runs in $2^{O(\sqrt{k})}O(n+m)$ time.
	However, for the other problems, there is a gap between the running time of the best known algorithms and the best known lower bounds.
	
	\subparagraph{Our Result.}
	In this paper, we present an ETH-tight parameterized algorithm
	for the \fvs problem on unit disk graphs, which runs in $2^{O(\sqrt{k})}(n+m)$ time, where $n$ and $m$ denote the numbers of vertices and edges, respectively. 
	This improves the $2^{O(\sqrt{k}\log k)}n^{O(1)}$-time algorithm
	for this problem on unit disk graphs by Fomin et al.~\cite{fomin_et_al:LIPIcs:2017:7393}. 
	Moreover, unlike the algorithm in~\cite{fomin_et_al:LIPIcs:2017:7393}, our algorithm works on the graph itself and do not require the geometric representation. 
	Also, our algorithm indeed handles geometric intersection graphs of similarly sized fat objects in the plane without increasing the running time, which will be defined in Section~\ref{sec:graph}.
	
	\section{Preliminaries}\label{sec:prelim}
	For a graph $G=(V,E)$, we let $V(G)$ and $E(G)$ denote the sets of vertices and edges, respectively.
	For a subset $S$ of $V$, we let $G[S]$ be the subgraph of $G$ induced by $S$.
	Also, we let $G-S$ be the subgraph of $G$ induced by $V-S$.  
	
	\subsection{Geometric Intersection Graphs}\label{sec:graph}
	For a constant $0<\alpha<1$, an object $o\subseteq \mathbb{R}^2$ is said to be $\alpha$-$\emph{fat}$ if there are two disks $B_1$ and $B_2$ with 
	$B_1\subseteq o\subseteq B_2$ such that the radius ratio of $B_1$ and $B_2$
	is at least $\alpha$.\footnote{An \emph{object} is a point set in the plane, which is not necessarily connected.}
	See Figure~\ref{fig:fat}(a). 
	For example, a disk is a $1$-fat object, and a square is a $1/{\sqrt{2}}$-fat object. 
	We call a set $F$ of $\alpha$-fat objects a \emph{similarly sized set} if 
	the ratio between the largest and smallest diameter of the objects in $F$ is bounded by a fixed constant $\gamma$. 
	For an $\alpha$-fat object $o$ of $F$, 
	there are two \emph{concentric} disks $B_3$ and $B_4$  with 
	$B_3\subseteq o\subseteq B_4$ such that the radius ratio of $B_3$ and $B_4$ is
	at least $\alpha/2$. We consider the center of $B_3$ and $B_4$ as the \emph{center} of $o$. 
	For convenience, we assume that the smallest diameter of the objects of $F$
	is one.

	\begin{figure}
		\centering
		\includegraphics[width=0.7\textwidth]{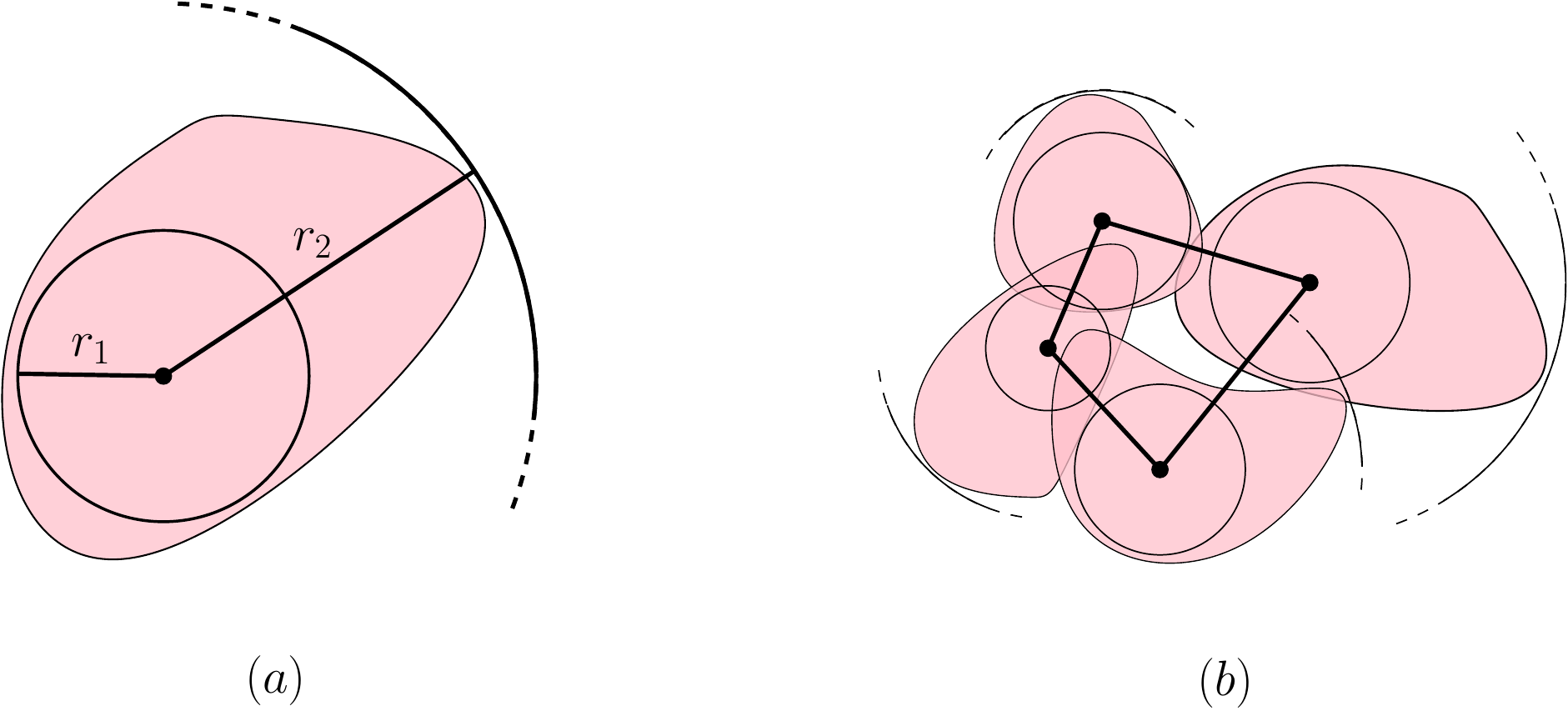}
		\caption{(a) A $\frac{r_1}{r_2}$-fat object. (b) The drawing of the geometric intersection graph $G[F]$.}
		\label{fig:fat}
	\end{figure}
	
	The \emph{intersection graph} $G[F]$ of a set $F$ of objects in $\mathbb{R}^2$ is 
	defined as the graph whose vertex set is $F$ and 
	two vertices are connected by and edge if and only if their corresponding objects intersect. 
	The \emph{drawing} of the intersection graph $G[F]$ is 
	a representation of $G[F]$ in the plane such that the vertices lie on the centers
	of the objects of $F$ and the edges are drawn as line segments. See Figure~\ref{fig:fat}(b).
	We sometimes use an intersection graph $G[F]$ and its drawing interchangeably if it is clear from the 
	context. 
	
	In this paper, we focus on geometric intersection graphs of objects in the \emph{plane} only.
	Because \fvs on unit ball graphs in $\mathbb{R}^3$ has no subexponential-time algorithm parameterized by $k$ unless ETH 
	fails~\cite{DBLP:conf/soda/FominLS12}.
	
	\subsection{Tree Decomposition and Weighted Width}
A \emph{tree decomposition} of a graph $G=(V,E)$ is defined as a pair $(T,\beta)$,
where $T$ is a tree and $\beta$ is a mapping from nodes of $T$ to subsets of $V$ (called bags) with the following properties.
Let $\mathcal {B}:=\{\beta(u) : u\in V(T)\}$ be the set of bags of $T$. 

\begin{enumerate}
	\item For any vertex $u\in V$, there is at least one bag in $\mathcal {B}$ which contains $u$;
	\item For any edge $(u,v)\in E$, there is at least one bag in $\mathcal {B}$ which contains both $u$ and $v$.
	\item For any vertex $u\in V$, the subset of bags of $\mathcal {B}$ containing $u$ forms a subtree of $T$.
\end{enumerate}

The $\emph{width}$ of a tree decomposition is defined as the size of its largest bag minus one, and the $\emph{treewidth}$ of $G$ is 
the minimum width of a tree decomposition of $G$.
In this paper, as in~\cite{de2020framework}, we use the notion of $\emph{weighted treewidth}$ introduced by~\cite{van2007safe}. 
Here, assume that each vertex $v$ has its weight.
The weight of each bag is defined as the sum of the weights of the vertices in the bag, and 
the \emph{weighted width} of tree decomposition is defined as the maximum weight of the bags. 
The \emph{weighted treewidth} of a graph $G$ is the minimum weighted width among the tree decompositions of $G$.
	
	\subsection{$\kappa$-Partition and $\mathcal{P}$-Contraction} 
	We use the concept of $\kappa$-$\emph{partitions}$ and $\mathcal{P}$-$\emph{contractions}$ introduced by de Berg et al.~\cite{de2020framework}. 
	Let $G[F]=(V,E)$ be a geometric intersection graph of a set $F$ of similarly sized $\alpha$-fat objects.
	A $\kappa$-$\emph{partition}$ of $G$ is a partition $\mathcal{P}$=$(P_1,...,P_t)$ of $V$ such that $G[P_i]$ is connected
	and is the union of  at most $\kappa$ cliques 
	for every partition class $P_i$. 
	Given a $\kappa$-partition $\mathcal{P}$, we consider a graph obtained by contracting the vertices in the same partition class to a single vertex and
	removing all 
	loops and parallel edges. 
	We call the resulting graph the $\mathcal{P}$-$\emph{contraction}$ of $G$ and denote it by $G_{\mathcal{P}}$.
	The weight of each vertex of $G_{\mathcal{P}}$ is defined as $\lceil \log |P_i| \rceil +1$, where $P_i$ denotes
	the partition class of $\mathcal{P}$ defining the vertex.
	
	De berg et al.~\cite{de2020framework} presented an $O(n+m)$-time algorithm
	for computing a $\kappa$-partition of $G[F]$ such that
	$G_{\mathcal P}$  has maximum degree at most $\delta$ for constants $\kappa$  and $\delta$.
	We call such a $\kappa$-partition a \emph{greedy partition} of $G$. 
	Given a greedy partition $\mathcal P$, 
	they showed that the weighted treewidth of $G_{\mathcal{P}}$ is $O(\sqrt{n})$.
	Moreover, they presented a $2^{O(\sqrt{n})}$-time algorithm for computing a tree decomposition of $G_{\mathcal P}$
	of weighted width $O(\sqrt{n})$.

	
	\subsection{Overview of Our Algorithm}
	Our algorithm consists of two steps: computing a (weighted) tree decomposition of $G$, and then
	using a dynamic programming on the tree decomposition. 
	As in~\cite{de2020framework}, we first compute 
	a greedy partition. Then we show that the weighted treewidth of $\contract{G}$ is $O(\sqrt{k})$ if $(G,k)$ is
	a \textbf{yes}-instance.  
	To do this, we first show that the weighted treewidth of $\contract{G}$
	is $O(\sqrt{k})$ if $G$ has $O(\sqrt{k})$ vertices of degree at least three in Section~\ref{sec:tw}. 
	Then we show that $G$ has $O(\sqrt{k})$ vertices of degree at least three
	in Section~\ref{sec:tw-yes}.
	
	Using this fact, we compute a constant approximation to the weighted treewidth of
	$\contract{G}$. If it is $\omega(\sqrt{k})$, we conclude that $(G,k)$ is a \textbf{no}-instance immediately. 
	Otherwise, we compute a feedback vertex set of size $k$, if it exists, using dynamic programming on a
	tree decomposition of $\contract{G}$ of weighted width $O(\sqrt{k})$. The dynamic programming algorithm is described in Section~\ref{sec:dp}.

	\section{Tree Decomposition and Weighted Treewidth}\label{sec:tw}
	In this section, we present the first step of our algorithm: computing a tree decomposition
	of $G_{\mathcal P}$ of weighted width $O(\sqrt{|U|})$ for a greedy partition $\mathcal P$ of a geometric intersection graph $G$, where 
	$U$ denotes the set of vertices of $G$ of degree at least three in $G$. 
	More precisely, we prove the following theorem. 
	
	\begin{theorem} \label{pc_wtw}
		Let $G$ be an intersection graph of similarly sized $\alpha$-fat objects
		with $n$ vertices and $m$ edges, and let $\mathcal{P}$ be a greedy partition of $G$. Then the weighted treewidth of $\contract{G}$ is $O(\sqrt{|U|})$,
		where $U$ denotes the set of vertices of $G$ of degree at least three in $G$. 
		Moreover,  we can compute a tree decomposition of $\contract{G}$ of weighted width $O(\sqrt{|U|})$
		in $2^{O(\sqrt {|U|})}(n+m)$ without using a geometric representation of $G$. 
	\end{theorem}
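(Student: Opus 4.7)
The plan is to prove the theorem in two steps: first bound the weighted treewidth of $\contract{G}$ by $O(\sqrt{|U|})$ via a geometric balanced-separator argument, and then compute a witnessing tree decomposition in the stated time purely combinatorially.

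For the structural bound, I would adapt de Berg et al.'s balanced-separator argument so that the balance condition is measured in $|U|$ instead of in $|V(G)|$. The crucial observation is that $G - U$ is a disjoint union of paths and cycles, because every vertex outside $U$ has degree at most two in $G$. This lets us treat the corresponding degree-$\leq 2$ chain-clusters of $\contract{G}$ as a path-like adjunct rather than as a contributor to the separator balance. Geometrically, since $\mathcal{P}$ is greedy, each cluster is contained in a ball of bounded radius, so the standard packing estimate gives that any circle of radius $r$ in the plane has its boundary intersect $O(r)$ clusters. Applying a shifted-grid/centerpoint argument on the $U$-containing clusters (a planar set of size $O(|U|)$), I obtain a disk of radius $O(\sqrt{|U|})$ whose interior contains between $|U|/3$ and $2|U|/3$ of the $U$-containing clusters and whose boundary cuts $O(\sqrt{|U|})$ clusters. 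Recursing and summing via $\sum_{i} \sqrt{(2/3)^{i}|U|} = O(\sqrt{|U|})$ then gives a tree decomposition of weighted width $O(\sqrt{|U|})$, after threading the chain-clusters back in as path-structured bags of constant cardinality.

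For the algorithmic part, the decomposition must be computed without using the geometric representation. I would invoke a graph-only constant-factor approximation for weighted treewidth --- specifically a weighted analog of the single-exponential treewidth approximation of Bodlaender et al.\ --- which runs in $2^{O(t)}(n+m)^{O(1)}$ time on any graph of weighted treewidth $t$. Since $t = O(\sqrt{|U|})$ by the structural step, this produces a tree decomposition of weighted width $O(\sqrt{|U|})$ in $2^{O(\sqrt{|U|})}(n+m)$ time from the graph alone.

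The main obstacle I anticipate is the separator weight estimate in the structural step: we need $O(\sqrt{|U|})$ cut clusters \emph{and} control of the $\lceil \log|P_{i}| \rceil + 1$ weights without an extra $\log n$ factor slipping in. The cleanest resolution will likely exploit the bounded-degree property of $\contract{G}$ --- so that cut clusters cannot all carry maximal log-weight simultaneously --- together with an upfront pruning of pathologically large clusters whose presence would already force the \textbf{no}-instance answer.
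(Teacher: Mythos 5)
Your overall architecture (a balanced-separator recursion measured in $|U|$, followed by a representation-free constant-factor weighted-treewidth approximation) matches the paper's, but your structural step follows the original de Berg et al.\ circle-separator route, whereas the paper builds the Delaunay triangulation $H$ of the object centers, contracts it to a planar graph $\contractt{H}$ on the $U$-containing clusters, and applies Djidjev's \emph{weighted} planar separator theorem. The two points you leave open are exactly where that detour does real work, and your proposed fixes do not close them. On the weight control: you suggest "pruning pathologically large clusters whose presence would already force the \textbf{no}-instance answer," but Theorem~\ref{pc_wtw} is a purely structural statement about an arbitrary intersection graph and its greedy partition --- there is no parameter $k$ and no instance to reject, so nothing can be pruned. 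The actual resolution is unconditional and elementary: since $(\log x+1)^2\le 2x$ for $x\ge 1$ and $\bar{\mathcal P}$ partitions $U$, one has $\sum_i(\lceil\log|\bar P_i|\rceil+1)^2=O(|U|)$, and a separator theorem bounding the separator weight by $O\bigl(\sqrt{\sum_i w_i^2}\bigr)$ then gives weight $O(\sqrt{|U|})$ with no pruning and no appeal to bounded degree.

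The more serious gap is the interaction between your geometric separator and the suppressed chains. Once the degree-$\le 2$ chains are contracted away so that balance is measured on the $O(|U|)$ $U$-containing clusters, two such clusters can be adjacent in $\contractt{G}$ through a chain that is geometrically \emph{long}. A circle that separates the two endpoint clusters without cutting either one cuts only chain-clusters, which are no longer vertices of the graph you are separating, so it does not induce a vertex separator of $\contractt{G}$ at all. The paper spends Lemmas~\ref{spanner} and~\ref{spanner2} proving that every edge of $\contractt{G}$ --- including these chain-induced edges --- corresponds to a path of at most $t=O(1)$ vertices in the planar spanner $\contractt{H}$, and then inflates the planar separator $S$ to $S'=\bigcup_{\bar v\in S}N(\bar v)$ precisely to absorb this problem; your "path-like adjunct" phrasing does not explain how a circle certifies a separator against such long edges, and without that the recursion has no valid base case. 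The algorithmic half of your proposal is fine and is essentially what the paper does, made concrete by blowing up each vertex $v$ into a clique of size $\lceil\log|P(v)|\rceil+1$ and running the standard $2^{O(tw)}\cdot(n+m)$ treewidth approximation on the resulting unweighted graph.
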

	
	In the following, 
	let $G$ be an intersection graph of similarly sized $\alpha$-fat objects, $\mathcal{P}$ be a greedy partition of $G$,
	and $U$ be the set of vertices of $G$ of degree at least three in $G$. 
	In Section~\ref{sec:width-proof}, we prove the first part of the theorem, and 
	in Section~\ref{sec:construction}, we prove the second part of the theorem. 
	
	\begin{figure}
		\centering
		\includegraphics[width=0.8\textwidth]{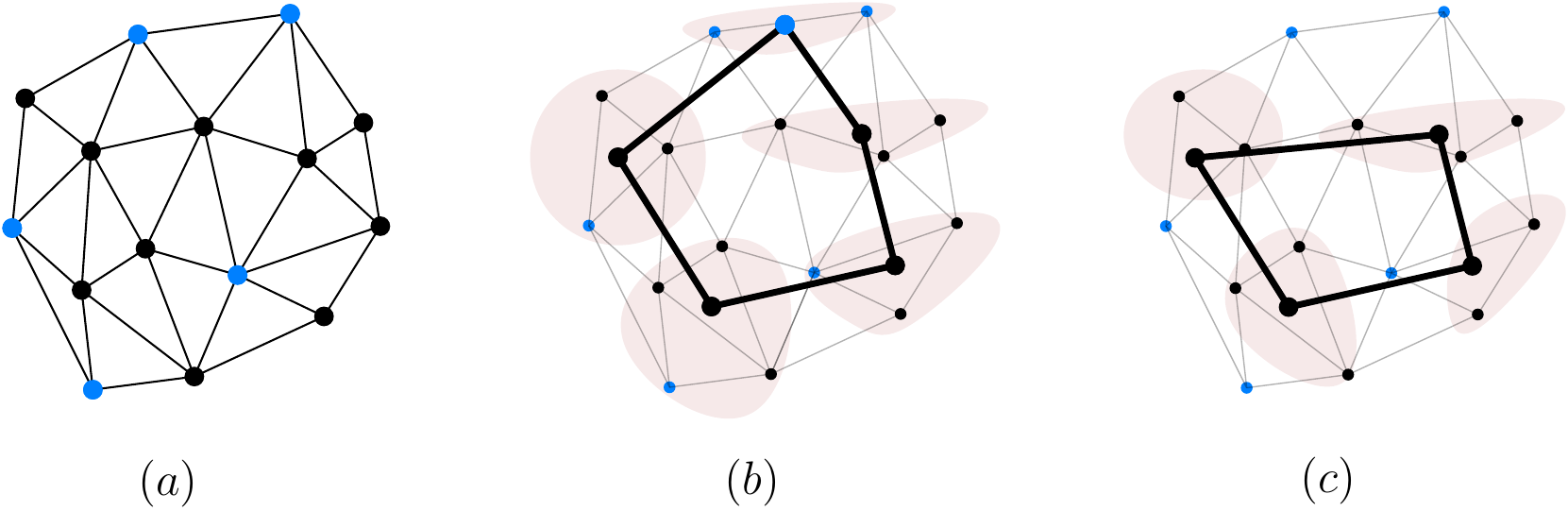}
		\caption{(a) The Delaunay triangulation $H$ of the vertex set of $V(G)$. The vertices of $G$ of degree at most two are colored blue.  
		(b) $\mathcal{P}$ partitions $V$ into five subsets. Each subset of $\mathcal{P}$
		is contained in a single pink region.
		Then the $\mathcal{P}$-contraction of $H$ is the cycle consisting of five edges.  (c)~Each subset of $\bar{\mathcal P}$ consists of the black points contained in a single pink region.}
		\label{fig:contraction}
	\end{figure}

	\subsection{Weighted Treewidth of the \texorpdfstring{$\mathcal P$}{P}-contraction}\label{sec:width-proof}
	We first show that the weighted treewidth of  $\contract{G}$ is $O(\sqrt{|U|})$. 
	To do this, we transform $G$ into a planar graph $H$. As we did for $G$, we compute the $\mathcal P$-contraction of $H$,
	and remove degree-2 vertices of the $\mathcal P$-contractions of $H$ and $G$ in a specific way.
	In this way, we obtain $\contractt{H}$ and $\contractt{G}$ of complexity $O(\sqrt{|U|})$. 
	Also, we assign the weight to each vertex of the resulting graphs. 
	Then we transform a balanced separator of $\contractt{H}$ of weight $O(\sqrt{|U|})$ into a balanced separator of  $\contractt{G}$ of weight $O(\sqrt{|U|})$. Using a weighted balanced separator, we compute 
	a tree decomposition of $\contractt{G}$ of weighted width $O(\sqrt{|U|})$, and then transform it into a tree decomposition of $\contract{G}$
	of weighted width $O(\sqrt{|U|})$. 
	
	\subparagraph{Delaunay triangulation and its contraction: $H$ and $\contract{H}$.}
	Consider the Delaunay triangulation $H$ of the point set $V(G)$. Here, we consider $V(G)$ as the set of the centers of the objects defining $G$. 
	We consider the Delaunay triangulation $H$ as an edge-weighted plane graph such that the \emph{length} of each edge is the Euclidean distance between their endpoints. 
	The Delaunay triangulation $H$ is a \emph{5.08}-\emph{spanner} of the complete Euclidean graph defined by $V(G)$~\cite{dobkin1990delaunay}.  
	That is, for any two points $u$ and $v$ of $V(G)$, the length of the shortest path 
	in $H$ between $u$ and $v$ is at most $5.08$ times their Euclidean distance. 

	Notice that $H$ might contain an edge which is not an edge of $G$. However, the following still holds as $G$ is a subgraph of the compete
	Euclidean graph. 
	For any edge $(u,v)$ in $G$, there is a $u$-$v$ path in $H$ consisting of 
	$5.08\|uv\|$ edges.
	Recall that $\alpha$ is the measure for the fatness of the objects defining $G$, which is a constant.  
	Let $\contract{H}$ be the $\mathcal P$-contraction of $H$. 
	Note that	$V(\contract{G})=V(\contract{H})$.
	For a vertex $v$ of $\contract{G}$ or $\contract{H}$, we let $P(v)$ to denote the partition class of $\mathcal{P}$ corresponding to $v$. 
	See Figure~\ref{fig:contraction}(b).
	
	
	\subparagraph{Partition $\mathcal {\bar P}$ of $U$: $\contractt{G}$ and $\contractt{H}$.}
	The size of $\contract{G}$ and $\contract{H}$ might be $\Theta(n)$ even if $|U|$ is small. 
	Recall that $U$ is the set of vertices of $G$ of degree at least three in $G$. 
	To obtain a balanced separator of $\contract{G}$ of small weight, we compute a new graph $\contractt{G}$ and $\contractt{H}$ of size $O(|U|)$ as follows.
	
	Let $\mathcal{P}=(P_1,\ldots, P_t)$, which is a partition of $V(G)$. 
	Using this, we consider the partition $\mathcal {\bar P} = (\bar P_1, \ldots \bar P_t)$ of $U$ such that 
	$\bar P_i = P_i \cap U$. Note that $\bar P_i=\emptyset$ if every vertex of $P_i$ has degree at most two in $G$. 
	We first ignore the empty partition classes from $\mathcal {\bar P}$, and let $\contractt{G}$ and $\contractt{H}$ be
	the $\mathcal {\bar P}$-contraction of $G$ and the $\mathcal {\bar P}$-contraction of $H$, respectively. 
	We add several edges to $\contractt{G}$ and $\contractt{H}$ by considering the empty partition classes of $\mathcal {\bar P}$ as follows. 
	Since $G[P_i]$ is connected and consists of $\kappa$ cliques, $G[P_i]$ is a simple cycle or a simple path if $\bar P_i=\emptyset$.
	Let $P_\emptyset$ be the union of $P_i$'s for all indices with $\bar P_i=\emptyset$. 
	Then each connected component of $G[P_\emptyset]$ is also a simple cycle or a simple path. 
	If the connected component is a simple cycle, no vertex in the component is connected by a vertex of $U$ in $G$. 
	Otherwise, the endpoints of the simple path, say $p$ and $q$, are contained in $U$. 
	In this case, we connect the vertices of $\contractt{G}$ (and $\contractt{H}$) by an edge unless this edge forms a loop. 
	See Figure~\ref{fig:contraction}(c).
	
	For a vertex $\bar v$ of $\contractt{G}$ or $\contractt{H}$, we let $\bar{P}(\bar v)$ be the partition class of $\mathcal{\bar P}$ corresponding to $\bar v$. 
	By construction, $\bar P(\bar v)$ is not empty for any vertex $\bar v$. 
	Each vertex $\bar v$ of $\contractt{G}$ has a $g$-\emph{weight} $g(\bar{v})=\lceil \log |\bar P(\bar v)| \rceil + 1$. 
	Also, each vertex $\bar v$ of $\contractt{H}$ has a $h$-\emph{weight} $h(\bar{v})=\sum_{\bar u \in N(\bar v)} (\lceil \log |\bar P(\bar u)|\rceil + 1)$, where
	$N(\bar v)$ is the set of vertices $\bar u$ of $\contractt{H}$ such that 
	there is a $\bar u$-$\bar v$ path in $\contractt{H}$ consisting of at most $t=(\delta+1)(10.16(\alpha\gamma)^{-1})^2\pi$ vertices.

	\begin{observation}
		A vertex $\bar v$ of $\contractt{H}$ has a constant degree, and thus 
		the size of $N(\bar v)$ is $O(1)$. 
	\end{observation}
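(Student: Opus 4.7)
The second claim follows from the first once we observe that $t$ is a universal constant depending only on $\alpha$, $\gamma$, and $\delta$: if $\Delta$ is the maximum degree of $\contractt{H}$, then $|N(\bar v)|\le\Delta^{t}=O(1)$ whenever $\Delta=O(1)$. So it suffices to bound $\deg_{\contractt{H}}(\bar v)$. First, I would establish geometric control on $\bar v$: the class $\bar P(\bar v)\subseteq P_i$ is contained in a disk $B_i$ of radius $R=O(\kappa\gamma)$, because $G[P_i]$ is connected and is the union of at most $\kappa$ cliques of similarly sized $\alpha$-fat objects, each clique having diameter at most $\gamma$ (two intersecting objects of diameter at most $\gamma$ have centers within distance $\gamma$). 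A second ingredient is the separation property: any two vertices $x,y\in V(G)$ whose objects do not intersect satisfy $\|xy\|\ge\alpha/2$, since each $\alpha$-fat object of diameter at least $1$ contains a concentric disk of radius at least $\alpha/4$.

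I would then split the neighbors of $\bar v$ in $\contractt{H}$ into (i) those arising from a contracted Delaunay edge $(x,y)\in E(H)$ with $x\in\bar P(\bar v)$ and $y$ in some other non-empty class $\bar P(\bar u)$, and (ii) those arising from an added edge whose underlying simple path component of $G[P_\emptyset]$ has $G$-neighbors in $U$ lying one in $\bar P(\bar v)$ and the other in some other non-empty class. For (i) with $y$ at constant distance from $B_i$, a packing argument bounds the count: representatives of distinct non-$G$-adjacent classes are at mutual distance at least $\alpha/2$, giving $O(1)$ such classes by an area count, and a greedy coloring absorbs a factor of $\delta+1$ for the $G$-adjacent classes (using $\Delta(\contract{G})\le\delta$). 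For (i) with $y$ far from $B_i$, I would invoke the empty-circumdisk property of Delaunay triangulations: two long Delaunay edges leaving $B_i$ at similar angular directions would force two large empty disks meeting near $B_i$, contradicting the density of $V(G)$ there; hence angular separation yields $O(1)$ far-Delaunay neighbors. For (ii), the $G$-neighbor of the path endpoint that belongs to $\bar P(\bar v)$ forces that endpoint to lie within distance $\gamma$ of $B_i$, and the packing argument above again limits the number of such endpoints, each of which can contribute at most one added edge.

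The main obstacle, in my view, is the long-Delaunay case: a priori, a partition class could have arbitrarily many Delaunay neighbors located far away via long Delaunay spokes, so the bounded-degree property of $\contract{G}$ does not automatically transfer to $\contract{H}$. The empty-circumdisk angular-packing argument is exactly where the spanner constant $5.08$ and the factor $(\alpha\gamma)^{-1}$ appearing in the definition of $t$ enter. Summing the three contributions gives $\deg_{\contractt{H}}(\bar v)=O(1)$, and the bound $|N(\bar v)|=O(1)$ follows immediately.
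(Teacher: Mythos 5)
You have correctly isolated the crux of this observation: the reduction from $|N(\bar v)|=O(1)$ to a degree bound on $\contractt{H}$, the packing arguments for short Delaunay edges and for the edges added on account of empty classes, and the warning that the bounded degree of $\contract{G}$ does not automatically transfer to the contraction of the Delaunay triangulation are all fine. The problem is the one case you yourself flag as the main obstacle: your empty-circumdisk/angular-separation argument for long Delaunay edges presupposes that $V(G)$ is dense in an annulus around $B_i$, so that two overlapping large empty disks would ``contradict the density of $V(G)$ there.'' No such density is guaranteed --- the input only needs to be connected, and a partition class can sit at the centre of a large region containing no other centres. Concretely, place a tight cluster of three mutually intersecting unit disks at the origin, a ring of such clusters along a circle of radius $R$ with consecutive clusters intersecting, and one thickened path of clusters joining the origin to the ring. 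Every vertex then has degree at least three, so every class survives into $\contractt{H}$; yet any Delaunay triangle with all three vertices on the ring that covers territory near the origin has a circumdisk containing the origin, so the triangulation of the empty interior must fan out from the origin cluster to a constant fraction of the $\Theta(R)$ ring clusters, all lying in distinct partition classes. The class of the origin therefore has degree $\Omega(R)$ in $\contractt{H}$, and no packing or angular argument can exclude this.

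You are not overlooking a trick from the text: the paper states this observation with no proof at all, and the fan example above indicates that, for the unrestricted Delaunay triangulation $H$ as defined in the paper, the first claim of the observation is not merely unproved but false. Any honest write-up has to change the construction rather than the argument --- for instance, replace $H$ by a \emph{bounded-degree} planar spanner of the point set (such spanners with constant stretch exist, at the cost of a worse stretch constant than $5.08$, which only changes the constant $t$), after which your short-edge packing and path-edge arguments go through verbatim and $|N(\bar v)|\le \sum_{i\le t}\Delta^i=O(1)$ follows as you say. As written, your Case (i) ``far'' subcase is a genuine gap, and it is the load-bearing one: the subsequent lemma bounding $\sum_{\bar v} h(\bar v)^2$ and the transfer of the balanced separator from $\contractt{H}$ to $\contractt{G}$ both rely on $|N(\bar v)|=O(1)$.
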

	
	\subparagraph{Balanced separator of $\contractt{H}$ of small $h$-weights.}
	For a vertex-weighted graph $G'$ and a subgraph $H'$, 
	we denote the sum of the weights of the vertices of $H'$ by the $\emph{weight}$ of $H'$.
	a subset $S$ of $V(G')$ is called a \emph{balanced separator} of $G'$  
	if the weight of each connected component of $G'-S$ is at most $2/3$ of the weight of $G'$. 
	The \emph{weight} of a balanced separator $S$ of $G'$ is the sum of the weights of the vertices of $S$.  
	Djidjev~\cite{djidjev1997weighted} showed that  
	a planar graph $G'$ has a balanced separator of weight $O(\sqrt{\sum_{v\in V}w(v)^2})$, where $w(v)$ is the weight of a vertex $v$. 
	
	\medskip
	The following lemma implies that  $\contractt{H}$ has a balanced separator of $h$-weight $O(\sqrt{|U|})$.
	
	\begin{lemma}
		The sum of $h(\bar v)^2$ for all vertices $\bar v$ of $\contractt{H}$ is $O({|U|})$. 
	\end{lemma}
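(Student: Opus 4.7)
The plan is to bound each $h(\bar v)^2$ by a constant times the sum of $(\log |\bar P(\bar u)| + 1)^2$ over $\bar u \in N(\bar v)$, then interchange the order of summation, and finally invoke the elementary inequality $(\log x + 1)^2 = O(x)$ for $x \geq 1$. The cleanest way to see the final step is to note that $(\log x)^2/x \to 0$, so there is an absolute constant $C$ with $(\lceil \log x \rceil + 1)^2 \leq C x$ for every integer $x \geq 1$; combined with the fact that $\{\bar P(\bar u)\}$ is a partition of $U$, this will convert a sum of squared-logarithms into $O(|U|)$.

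First, by the preceding observation, $|N(\bar v)| = O(1)$, so bounding the square of a sum of $O(1)$ nonnegative terms by a constant times the sum of squares gives
$$ h(\bar v)^2 \;=\; O\!\left(\sum_{\bar u \in N(\bar v)} (\lceil \log |\bar P(\bar u)|\rceil + 1)^2\right). $$
Summing over $\bar v$ and swapping the order of summation is justified by the symmetry of the relation ``$\bar u \in N(\bar v)$,'' which is defined via the existence of a path in $\contractt{H}$ of at most $t$ vertices; hence each fixed $\bar u$ belongs to $N(\bar v)$ for at most $|N(\bar u)| = O(1)$ vertices $\bar v$. This yields
$$ \sum_{\bar v \in V(\contractt{H})} h(\bar v)^2 \;=\; O\!\left(\sum_{\bar u \in V(\contractt{H})} (\lceil \log |\bar P(\bar u)| \rceil + 1)^2\right). $$

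Applying the inequality $(\lceil \log x \rceil + 1)^2 \leq C x$ to $x = |\bar P(\bar u)| \geq 1$ and using that the partition classes $\bar P(\bar u)$ are disjoint and cover $U$, the right-hand side is at most $C \sum_{\bar u} |\bar P(\bar u)| = C|U|$, which gives the claimed $O(|U|)$ bound.

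I do not expect any real obstacle here: the content is essentially the convexity bound $(\log n)^2 = O(n)$ together with the constant-degree observation already stated. The only mild subtlety is the sum-interchange step, which requires the symmetry of the ``within $t$ hops'' relation; this is immediate from the definition of $N(\bar v)$.
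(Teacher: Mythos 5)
Your proof is correct and follows essentially the same route as the paper's: bound $h(\bar v)^2$ by a constant times $\sum_{\bar u\in N(\bar v)} g(\bar u)^2$ via the constant size of $N(\bar v)$ (the paper phrases this as Cauchy--Schwarz), swap the order of summation using the symmetry of the ``within $t$ hops'' relation, and finish with $(\lceil\log x\rceil+1)^2=O(x)$ together with the fact that $\bar{\mathcal P}$ partitions $U$. No substantive difference; if anything, your statement of the final inequality with an explicit constant $C$ is slightly more careful than the paper's claim that $2x\geq(\log x+1)^2$.
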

	\begin{proof}
		Let $\bar v$ be a vertex of $\contractt{H}$. The $h$-weight of $\bar v$ is the sum of $g(\bar u)= (\lceil \log |\bar P(\bar u)| \rceil+1)$ for all 
		vertices $\bar u$ of $N(\bar v)$ by definition. 
		By Cauchy–Schwarz inequality, $(  \sum_{\bar u\in N(\bar v)} g(\bar u))^2$ is at most 
		$|N(\bar v)| (\sum_{\bar u\in N(\bar v)} g(\bar u)^2)$. Since the size of $N(\bar v)$ is at most a constant, say $c$, 
		we have the following inequality. 
		\[
		{\sum_{\bar v \in V} h(\bar v)^2} \leq c \cdot \sum_{\bar v \in V} \sum_{\bar u\in N(\bar v)} g(\bar u)^2 =  c \cdot \sum_{\bar u \in V} \sum_{\bar v\in N(\bar u)} g(\bar u)^2,
		\]
		where $V$ denotes the vertex set of $\contractt{H}$. 
		Since $\sum_{\bar v\in N(\bar u)} g(\bar u)^2= |N(\bar u)|g(\bar u)^2 \leq c \cdot g(\bar u)^2$, we finally have the following. 

\[
\sum_{\bar v \in V} h(\bar v)^2 \leq c^2 \cdot \sum_{\bar u\in V}  g(\bar u)^2  = c^2 \cdot \sum_{\bar u\in V} (\lceil \log{|\bar P(\bar u)| \rceil}+1)^2 =
 O(\sum_{\bar u \in V}|\bar P(\bar u)|) = O(|U|).
\]
Here, the second equality holds because $2x \geq (\log x+1)^2$ for all values $x \geq 1$. 
Also, the last equality holds because $\mathcal {\bar P}$ is a partition of $U$. 
%
	\end{proof}
	
	\subparagraph{Balanced separator of $\contractt{G}$ of small $g$-weight.}
	Let $S$ be a balanced separator of $\contractt{H}$ of $h$-weight $O(\sqrt{|U|})$.
	Let $S'$ be the union of  $N(\bar v)$ for all $\bar v\in S$. 
	Recall that $|V(\contractt{G})|=|V(\contractt{H})|$. 
	The sum of the $g$-weights of the vertices of $S'$ is $O(\sqrt{|U|})$ by definition of the $g$- and $h$-weights. 
	Therefore, it suffices to show that the weight of each connected component of $\contractt{G}-S'$ is at most $2/3$ of the weight of $\contractt{G}$.
	
	\begin{figure}
		\centering
		\includegraphics[width=0.6\textwidth]{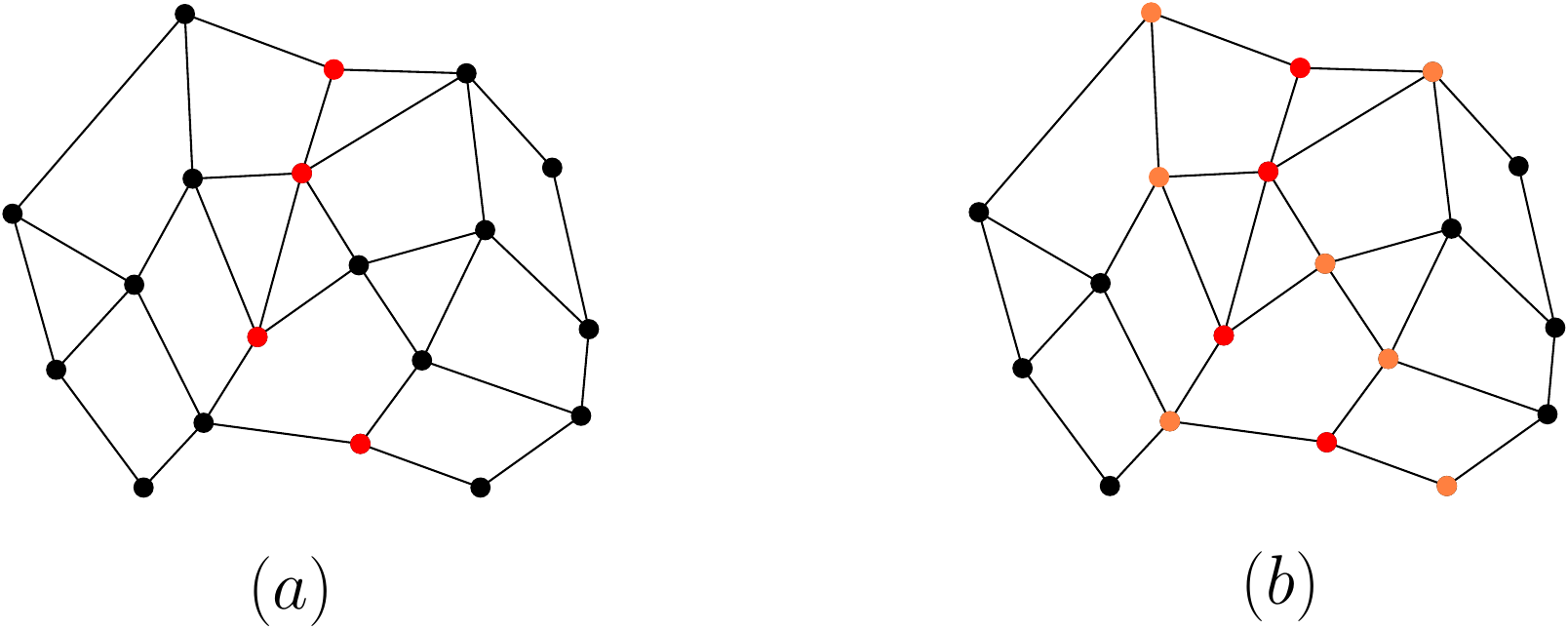}
		\caption{(a) A balanced separator $S$ of $\contractt{H}$ consists of the red vertices.  (b) A balanced separator $S'$ of $\contractt{G}$ 
			consists of the red vertices and the orange vertices.
		The vertices whose distance from the red vertices are at most $t=1$ are colored with orange.}
		\label{fig:separator}
	\end{figure}

	%
	\begin{lemma}\label{spanner}
		For any edge $(u,v)$ of $\contract{G}$, there is a $u$-$v$ path in $\contract{H}$ consisting of $t$ vertices.\footnote{Recall that $t=(\delta+1)(10.16(\alpha\gamma)^{-1})^2\pi$.}
	\end{lemma}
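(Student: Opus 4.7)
The plan is to turn the edge $(u,v)$ of $\contract{G}$ into a short Euclidean path in $H$ using the spanner property of the Delaunay triangulation, and then contract that path down to $\contract{H}$, bounding the number of distinct partition classes it visits by a geometric packing argument.

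By the definition of $\contract{G}$, there exist $p\in P(u)$ and $q\in P(v)$ with $(p,q)\in E(G)$; equivalently, the corresponding objects $o_p$ and $o_q$ intersect. The concentric disks $B_3\subseteq o\subseteq B_4$ from Section~\ref{sec:graph} have outer radius bounded by a constant in $\alpha$ and $\gamma$ (since $o$ has diameter in $[1,\gamma]$ and is $\alpha$-fat), and since $o_p\cap o_q\neq\emptyset$ forces $B_4^{(p)}\cap B_4^{(q)}\neq\emptyset$, one obtains a constant bound on $\ver{p}{q}$. Plugging this bound into the $5.08$-spanner property of $H$ yields a $p$-$q$ path $\pi$ in $H$ whose total Euclidean length is at most $5.08\cdot\ver{p}{q}$; in particular every vertex of $\pi$ lies inside the closed disk $D$ of radius $10.16/(\alpha\gamma)$ centered at $p$.

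Next I lift $\pi$ to $\contract{H}$. Reading the partition classes containing the successive vertices of $\pi$ and collapsing each maximal run of vertices from the same class (which correspond to self-loops removed by the $\mathcal P$-contraction) produces a walk from $u$ to $v$ in $\contract{H}$; extracting a simple subpath, its number of vertices is at most the number of partition classes of $\mathcal{P}$ containing an object with center inside $D$. Hence it remains to show that at most $t=(\delta+1)(10.16/(\alpha\gamma))^2\pi$ such classes exist.

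For this packing bound, since the smallest object has diameter $1$ and every object is $\alpha$-fat, each object contains an inner $B_3$-disk whose radius is at least a constant depending only on $\alpha$. Two partition classes non-adjacent in $\contract{G}$ have pairwise disjoint objects, and hence pairwise disjoint inner disks. Because $\contract{G}$ has maximum degree $\delta$, the induced subgraph on the partition classes with an object centered in $D$ admits a proper $(\delta+1)$-coloring, splitting those classes into $\delta+1$ pairwise-non-adjacent groups; within each group the inner disks are pairwise disjoint and contained in a small enlargement of $D$, so a standard area argument caps the count per group by $(10.16/(\alpha\gamma))^2\pi$, and summing over the $\delta+1$ groups yields $t$. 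The main obstacle is exactly this final step --- converting the combinatorial maximum-degree bound on $\contract{G}$ into the precise geometric constant of the statement; the earlier reductions are routine uses of the definitions of $\contract{G}$, $\contract{H}$, and the Delaunay spanner theorem.
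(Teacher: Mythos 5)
Your argument follows the paper's proof in outline: pick $p\in P(u)$ and $q\in P(v)$ with $(p,q)\in E(G)$, bound $\ver{p}{q}$ by a constant via fatness and similar size, invoke the $5.08$-spanner property of the Delaunay triangulation to get a $p$-$q$ path in $H$ confined to a disk $D$ of radius $10.16(\alpha\gamma)^{-1}$ around $p$, contract that path to a walk in $\contract{H}$, and bound the number of partition classes it meets. The one place you genuinely diverge is the packing count. The paper overlays a grid of cells of diameter $1/\sqrt{2}$, notes that the vertices in one cell form a clique (so at most $\delta+1$ classes can meet a cell, since otherwise some vertex of $\contract{G}$ would exceed degree $\delta$), and multiplies by the number of cells meeting $D$. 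You instead properly $(\delta+1)$-color the classes having a center in $D$ (again using the degree bound on $\contract{G}$), observe that non-adjacent classes have pairwise disjoint objects and hence pairwise disjoint inner fatness disks, and pack those disks into an enlargement of $D$. Both mechanisms are sound and both convert the degree bound $\delta$ plus an area bound into a constant; yours needs the additional (true) observation that one should take a single representative object per class, since objects within the same class may overlap. As you note, neither derivation literally produces the stated constant $t=(\delta+1)(10.16(\alpha\gamma)^{-1})^2\pi$ --- your enlarged disk and the inner-disk area introduce extra factors in $\alpha$ and $\gamma$, and for that matter the paper's own cell count silently drops the cell-area factor --- but only the fact that $t$ is some constant is used downstream (in the definition of $N(\bar v)$ and the $h$-weights), so this discrepancy does not affect the correctness of the overall argument.
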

	\begin{proof}
		Consider an edge $(u,v)$ of $\contract{G}$.
		By construction, there is an edge $(p, q)$ in $G$ such that $p\in P(u)$ and $q\in P(v)$.
		Recall that Delaunay triangulation is a $5.08$-spanner of the complete Euclidean graph. 
		Therefore, there exists a $p$-$q$ path $\tau$ in $H$ such that the sum of the lengths (Euclidean distance between two endpoints)
		of the edges is at most $5.08 \ver{p}{q}$.
		
		We claim that $\tau$ intersects at most $(\delta+1)(10.16(\alpha\gamma)^{-1})^2\pi$ partition classes of $\mathcal{P}$, where $\delta$ is the maximum degree
		of $\contract{G}$, $\alpha$ is the measure for the fatness of the objects, and $\gamma$ is the ratio of the largest and smallest diameters of the objects. 
		Consider a $\emph{grid}$ of the plane, which is a partition of the plane into axis-parallel squares (called \emph{cells}) with diameter $1/\sqrt{2}$.
		Recall that the smallest diameter of the objects defining $G$ is one. 
		For a cell $\sigma$, we denote the set of vertices of $G$ contained in $\sigma$ by $P_\sigma$.
		By definition, $P_\sigma$ forms a clique in $G$. Since the maximum degree $\delta$ of $\contract{G}$ is constant, at most $\delta +1$ partition classes of $\mathcal P$ has their vertices in $P_\sigma$. 
		Note that every object is contained in a ball of radius $(\alpha\gamma)^{-1}$.
		Therefore, $\ver{p}{q}\leq 2(\alpha\gamma)^{-1}$. Then, the Euclidean length of $\tau$ is at most $10.16(\alpha\gamma)^{-1}$.
		This implies that $\tau$ is contained in a ball of radius $10.16(\alpha\gamma)^{-1}$ centered at $p$. In other words, $\tau$ intersects at most $(10.16(\alpha\gamma)^{-1})^2\pi$ cells.
		Since each cell intersects at most $\delta+1$ partition classes of $\mathcal P$, 
		$\tau$ intersects $(\delta+1)(10.16(\alpha\gamma)^{-1})^2\pi$ partition classes. Let $t=(\delta+1)(10.16(\alpha\gamma)^{-1})^2\pi$. 
		
		Then we consider a $u$-$v$ path in $\contract{H}$ obtained by replacing each point in $\tau$ to its contracted point in $\contract{H}$.
		This path now consists of $t$ different vertices of $\contract{H}$, but it is not necessarily simple. 
		We can obtain a simple path by removing  duplicated subpaths that have same endpoints. 
		Then we obtain a simple path of consisting of $t$ vertices. This completes the proof.
	\end{proof}
	
	\begin{lemma} \label{spanner2}
		For any edge $(\bar u,\bar v)$ of $\contractt{G}$, there is a $\bar u$-$\bar v$ path in $\contractt{H}$ consisting of $t$ vertices.
	\end{lemma}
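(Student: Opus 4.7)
The plan is to follow the case distinction implicit in the construction of $\contractt{G}$: every edge of $\contractt{G}$ is either a \emph{base edge} inherited from the $\mathcal{\bar P}$-contraction of $G$, or an \emph{added edge} corresponding to a simple path component of $G[P_\emptyset]$. For an added edge, by the symmetric construction of $\contractt{H}$ the identical edge was also inserted into $\contractt{H}$, so the lemma is immediate with a two-vertex path. The real work is in the base case.

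In the base case, there is a $G$-edge $(p, q)$ with $p \in \bar P(\bar u)$ and $q \in \bar P(\bar v)$, and in particular $p, q \in U$; since $p$ and $q$ lie in distinct $\mathcal{P}$-classes, this same edge descends to an edge of $\contract{G}$ between the vertices $u^\star, v^\star$ representing $P(p), P(q)$. I then apply Lemma~\ref{spanner} to obtain a path $\pi$ of $t$ vertices in $\contract{H}$ from $u^\star$ to $v^\star$. I translate $\pi$ into a walk in $\contractt{H}$ by keeping each vertex whose $\mathcal{P}$-class has non-empty intersection with $U$ and dropping those whose class lies wholly inside $P_\emptyset$. Consecutive surviving vertices that were already adjacent in $\pi$ are joined by a base edge of $\contractt{H}$ inherited from $\contract{H}$, and each dropped run of $P_\emptyset$-vertices is bridged by an added edge of $\contractt{H}$ coming from the corresponding simple path component of $G[P_\emptyset]$. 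After removing repetitions, this yields a simple path of at most $t$ vertices.

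The main obstacle is the bridging step. To handle it I descend from $\pi$ to the underlying Delaunay $H$-path of Euclidean length $O((\alpha\gamma)^{-1})$ that the proof of Lemma~\ref{spanner} produces. Because this $H$-path stays inside a ball of constant radius, it meets only $O(1)$ components of $G[P_\emptyset]$, and each dropped run of $\pi$ corresponds to traversal of exactly one such component whose $U$-neighbors lie in the non-empty $\bar P$-classes flanking the run; the added bridge edge of $\contractt{H}$ supplied by this component therefore provides the required edge.
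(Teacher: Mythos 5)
Your overall strategy is the same as the paper's: split on whether the edge of $\contractt{G}$ is inherited from a $G$-edge between two nonempty classes or is one of the bridge edges added for a path component of $G[P_\emptyset]$, and in the first case push the $t$-vertex path of Lemma~\ref{spanner} from $\contract{H}$ down to $\contractt{H}$. Your treatment of the added-edge case is simpler than the paper's Case~2 --- you just invoke the fact that the identical bridge edge is inserted into $\contractt{H}$, which is indeed what the construction states --- and that shortcut is legitimate.

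The gap is in your bridging step for the base case. You assert that each dropped run of empty-class vertices along $\pi$ ``corresponds to traversal of exactly one component of $G[P_\emptyset]$ whose $U$-neighbors lie in the nonempty classes flanking the run'', and you justify this only by the observation that the underlying Delaunay path meets $O(1)$ components. That observation does not give you the claim. The path $\pi$ lives in $\contract{H}$, i.e., in the Delaunay triangulation, whose edges need not be edges of $G$; a consecutive block of $P_\emptyset$-vertices on the Delaunay path can therefore hop between several distinct components of $G[P_\emptyset]$, and even when it stays inside one component it can enter and leave that component at interior vertices via Delaunay edges that are not $G$-edges. In either situation the added edge of $\contractt{H}$ supplied by a component joins the classes of that component's two $G$-neighbors in $U$, which need not be the two classes flanking your run, so the bridge you need may not exist. (A related caveat applies to your claim that consecutive surviving vertices of $\pi$ are joined by a base edge of $\contractt{H}$: the $H$-edge witnessing adjacency in $\contract{H}$ may have an endpoint outside $U$.) To be fair, the paper's own Case~1 disposes of exactly this point with a one-line remark about internal vertices having degree at least two, so you are not being held to a higher standard than the authors met; but as written, your justification of the bridging step does not establish it.
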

	\begin{proof}
		We consider an edge $(\bar u,\bar v)$ of $\contractt{G}$.
		Let $u$ and $v$ be the vertices of $\contract{G}$ with $P(u)\cap U=\bar P(\bar u)$ and $P(v)\cap U=\bar P(\bar v)$. 
		By construction, either 
		there is an edge $(u,v)$ in $\contract{G}$, or there is a $u$-$v$ path
		in $\contract{G}$ such that no internal vertex is contained in $U$.
		We consider these two cases separately. 
		
		\begin{itemize}
			\item \textbf{Case 1. $(u,v)$ is an edge of $\contract{G}$.}
			In this case, there is a $u$-$v$ path $\tau_1$ in $\contract{H}$ consisting of $t$ vertices by Lemma~\ref{spanner}. 
			Let $I$ be the sequence of the indices of the partition classes of $\mathcal{P}$ corresponding to the vertices of $\tau_1$. 
			Then let $I'$ be the subsequence of $I$ consisting of the indices $j$ with $\bar P_{j} \neq \emptyset$.
			If $I=I'$, then $\contractt{H}$ has a $\bar u$-$\bar v$ path consisting of $s$ vertices. 
			Otherwise, since the internal vertices of $\tau_1$ has degree at least two in $\contract{H}$, 
			the vertices of $\contractt{H}$ corresponding to $\bar P(i)$ and $\bar P(j)$ are connected by an edge
			for every consecutive indices $i$ and $j$ of $I'$. Therefore, there exists a $\bar u$-$\bar v$ path in $\contractt{H}$
			consisting of $t$ vertices.
			
			\item \textbf{Case 2. There is a $u$-$v$ path $\tau'$ in $\contract{G}$ with no internal vertices in $U$.} 
			Recall that $V(\contract{G})=V(\contract{H})$. 
			If $\tau'$ entirely exists in $\contract{H}$, there is an edge $(\bar u,\bar v)$ in $\contractt{H}$, and thus we are done. 
			If it is not the case, let $(u',v')$  be an edge in $\tau'$ which does not exist in $\contract{H}$. 
			By Lemma~\ref{spanner}, there is a $u'$-$v'$ path in $\contract{H}$ consisting of $t$ vertices. 
			Since the internal vertices of $\tau'$ have degree exactly two, the $u'$-$v'$ path contains both $u$ and $v$.
			Therefore, the subgraph of the $u'$-$v'$ path lying between $u$ and $v$ is a $u$-$v$ path in $\contract{H}$  consisting of at most $t$ vertices. 
			Then similarly to Case~1, we can show that there is a $\bar u$-$\bar v$ path in $\contractt{H}$ consisting of $t$ vertices. 
		\end{itemize}
		
		Therefore, for an edge $(\bar u,\bar v)$ in $\contractt{G}$, there is a $\bar u$-$\bar v$ path in $\contractt{H}$ consisting of at most $t$ vertices. This completes the proof.
	\end{proof}
	
	\begin{lemma} \label{H2ptoG2p}
		The weight of each connected component of $\contractt{G}-S'$ is at most $2/3$ of the weight of $\contractt{G}$.
	\end{lemma}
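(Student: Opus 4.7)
The plan is to lift the balanced separator $S$ of $\contractt{H}$ up to a balanced separator $S'$ of $\contractt{G}$ by exploiting Lemma~\ref{spanner2}. The core structural claim I would establish first is that every connected component $C'$ of $\contractt{G}-S'$ is contained, as a vertex set, in a single connected component $C$ of $\contractt{H}-S$. To prove this, pick $\bar u,\bar v\in C'$ together with a path $\bar u=\bar w_0,\bar w_1,\ldots,\bar w_k=\bar v$ in $\contractt{G}-S'$, and use Lemma~\ref{spanner2} to replace each edge $(\bar w_i,\bar w_{i+1})$ by a path $\tau_i$ in $\contractt{H}$ on at most $t$ vertices. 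The crucial subclaim is that each $\tau_i$ avoids $S$: if some $\bar s\in S$ lay on $\tau_i$, then the subpath of $\tau_i$ from $\bar s$ to $\bar w_i$ has at most $t$ vertices, so $\bar w_i\in N(\bar s)\subseteq S'$, contradicting $\bar w_i\notin S'$. Concatenating the $\tau_i$'s gives a $\bar u$-$\bar v$ walk in $\contractt{H}-S$, placing $\bar u$ and $\bar v$ in the same component of $\contractt{H}-S$.

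Once the containment $C'\subseteq C$ is in hand, the weight transfer is straightforward. Since $\bar v\in N(\bar v)$ trivially, we have $g(\bar v)\le h(\bar v)$ for every vertex, so the $g$-weight of $C'$ is at most the $g$-weight of $C$, which is in turn at most the $h$-weight of $C$. The balance of $S$ with respect to $h$-weight then bounds this by $2/3$ of the total $h$-weight of $\contractt{H}$. A double counting of the form $\sum_{\bar v} h(\bar v)=\sum_{\bar u}|N(\bar u)|\,g(\bar u)$, together with the $O(1)$ bound on $|N(\cdot)|$ recorded in the preceding observation, converts the total $h$-weight of $\contractt{H}$ into a constant multiple of the total $g$-weight of $\contractt{G}$. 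To recover exactly the $2/3$ factor rather than the larger constant $2c/3$ that falls out of the arithmetic, I would invoke Djidjev's theorem with a stricter balance ratio $\beta\le 2/(3c)$ when choosing $S$; this only affects the hidden constant in the $O(\sqrt{|U|})$ bound on the $h$-weight of $S$, and hence on the $g$-weight of $S'$.

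The main obstacle I foresee is the subclaim in the first step: ruling out the possibility that a lifted path in $\contractt{H}$ meets $S$. Everything downstream is routine once that containment is secured. In fact, the $t$-neighbourhood used to define $N(\cdot)$ was tailored precisely to the spanner-like guarantee of Lemma~\ref{spanner2}, so the conceptual content of the lemma is that thickening $S$ by this neighbourhood is exactly what is needed for the separator property to survive the passage from $\contractt{H}$ to $\contractt{G}$.
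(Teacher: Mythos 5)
Your proof follows the same route as the paper's: lift each edge of $\contractt{G}-S'$ to a $t$-vertex path in $\contractt{H}$ via Lemma~\ref{spanner2}, observe that such a path meeting some $\bar s\in S$ would force its endpoints into $S'=\bigcup_{\bar s\in S}N(\bar s)$, and conclude that every component of $\contractt{G}-S'$ sits inside a component of $\contractt{H}-S$. Your handling of the weight transfer is in fact more careful than the paper's (which argues only about vertex counts and glosses over the constant-factor gap between the total $h$-weight of $\contractt{H}$ and the total $g$-weight of $\contractt{G}$); requesting a stricter balance ratio from Djidjev's theorem is a legitimate way to recover the literal $2/3$, though any constant balance factor below $1$ would also suffice for the subsequent recursive tree-decomposition construction.
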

	\begin{proof}
		%
		We show that
		each connected component of $\contractt{G}-S'$ is a subset of a connected component of $\contractt{H}-S$.
		This implies that 		each connected component of $\contractt{G}-S'$ consists of at most $2|V(\contractt{G})|/3$ vertices
		 since $S$ is a balanced separator of $\contract{G}$, and $V(\contractt{G})=V(\contractt{H})$.

		To show this, consider an edge $(a, b)$ of $\contractt{G}-S'$. We claim that 
		$a$ and $b$ are contained in the same connected component of $\contractt{H}-S$. 
		By Lemma~\ref{spanner2}, 
		there is an $a$-$b$ path in $\contractt{H}$ consisting of $t$ vertices. 
		If this path contains a vertex of $S$, then $a$ and $b$ are contained in $S'$ by construction, which makes a contradiction. 
		Therefore, an $a$-$b$ path in $\contractt{H}$ does not intersects $S$, which means that $a$ and $b$ are contained in 
		the same connected component of $\contractt{H}-S$.
		
		Therefore, each connected component of $\contractt{G}-S'$ is a subset of the connected component of $\contractt{H}-S$, and the lemma holds. 
	\end{proof}
	
	\subparagraph{Tree decomposition of $\contractt{G}$.}
	To construct a tree decomposition of $\contractt{G}$, we observe that any subgraph of $\contractt{G}$ also has a balanced separator of small weight. 
	Let $A$ be a connected component of $\contractt{G}-S'$.
	Then consider the subgraph $\contractt{H}[A]$ of $\contractt{H}$ induced by the vertices in $A$.
	Since $\contractt{H}[A]$ is planar, it has a balanced separator with small $h$-weight, and thus $A$ has a balanced separator with small $g$-weight.
	Therefore, we can recursively compute a balanced separator for every connected component of $\contractt{G}-S'$. 

	We compute a tree decomposition of each connected component of $\contractt{G}-S'$ 
	recursively, and then we connect those trees by one additional empty bag, and add the separator to all bags of the resulting tree.
	Since the weight of the connected components decreases geometrically, the maximum weight of the bags is $O(\sqrt{|U|})$.
	Therefore, we can compute a tree decomposition $(T,\beta)$ of $\contractt{G}$  of weighted width $O(\sqrt{|U|})$.
	
	\subparagraph{Tree decomposition of $\contract{G}$.}
	The remaining step is computing a tree decomposition of $\contract{G}$ from $(T,\beta)$.
	Then we start with the pair $(T, \beta)$, and then add several nodes (and bags) to $T$ and add several vertices to the bags of $\beta$
	as follows. 
	
	Let $Q$ be the set of vertices $u$ of $\contract{G}$ with $P(u)\cap U\neq \emptyset$, and let 
	$Q^\mathsf{c}$ be the set of vertices of $\contract{G}$ not in $Q$. 
	For any vertex $v$ of $Q$, there is a vertex $\bar v$ in $\contractt{G}$ with $P(v)\cap U=\bar P( \bar v)$,
	and thus $\bar v$ is contained bags of $(T,\beta)$. We replace $\bar v$ with $v$ in the bags of $(T, \beta)$ containing $v$. 

	On the other hand, for a vertex $v$ of $Q^\mathsf{c}$, $P(v) \cap U =\emptyset$, and thus no vertex of $\contractt{G}$ corresponds to $v$.
	Thus we are required to insert $v\in Q$ into $T'$. 
	Recall that every vertex in $Q$ has degree one or two in $G$, and thus it also has degree one or two in $\contract{G}$. Also, $\contract{G}$ is connected.\footnote{We assume that the input intersection graph $G$ is connected, and thus $\contract{G}$ is connected by construction.}
	Therefore, $\contract{G}[Q]$ consists of a number of simple paths and isolated vertices, all of which are connected to at most two vertices of $Q$. Thus, there are the following four cases.

	\begin{itemize}
		\item \textbf{Case 1.} An isolated vertex $v$ is connected to only one vertex $u\in Q$.
		For a bag $\beta(x)$ of $(T,\beta)$ containing $u$, we add a leaf with bag $\{u,v\}$ as a child of $x$. 
		
		\item \textbf{Case 2.} An isolated vertex $v$ is connected to exactly two vertices $u, w\in Q$.
		By construction, there is an edge $(u,w)$ in $\contractt{G}$. 
		Then, there is a bag $\beta(x)$ of $T$ containing both $\bar u$ and $\bar w$.  
		We add  a leaf with bag $\{u,v,w\}$ as a child of $x$. 
		
		\item \textbf{Case 3.} A simple $v$-$v'$ path $(v_1,...,v_k)$ is connected to exactly one vertex $u\in Q$.
		Note that either $v$ or $v'$ is connected to $u$, but not both. 
		Without loss of generality, we assume that there is an edge $(u,v)$ in $\contract{G}$.
		For a bag $\beta(x)$ of $T$ containing $u$, we add  a node $x_1$ with bag $\{u,v\}$ as a child of $x$. 
		For $1< t < k$, 
		we add a node $x_t$ with bag $\{v_t, v_{t+1}\}$ as a child of $x_{t-1}$.
		
		\item \textbf{Case 4.} A simple $v$-$v'$ path $(v_1,...,v_k)$ is connected to two vertices $u, w\in Q$.
		In this case, $v$ is connected to one of $u$ and $w$, say $u$, and $v'$ is connected to the other vertex, say $w$. 
		Since $\contractt{G}$ has an edge $(\bar u, \bar w)$, there is a bag $\beta(x)$ of $T$ that contains both $u$ and $w$, where
		$\bar u$ and $\bar v$ are the vertices of $\contractt{G}$ corresponding to $u$ and $v$, respectively.
		We add a node $x_1$ with bag $(u, w, v, v')$ as a child of $x$.
		For $1< t < k/2$, 
		we add a node $x_t$ with bag $\{v_t, v_{t+1}, v_{k-t}, v_{k-t+1}\}$ as a child of $x_{t-1}$.
	\end{itemize}
	
	In this way, $(T,\beta)$ is a tree decomposition of $\contract{G}$.
	Now we analyze the weight of a bag of $(T,\beta)$. 
	Since we replace $\bar v$ with $v$ for a vertex $v$ of $Q$ in the bags of $(T,\beta)$,
	the weight of a bag $B$ of $(T,\beta)$ consisting of vertices of $Q$ increases. 
	Let $v$ be a vertex of $Q$, and let $\bar v_i$ be the vertex of $\contractt{G}$ corresponding to $v$. 
	Since $P(v)$ is composed of at most $O(1)$ cliques, it contains at most $O(1)$ vertices of $V\setminus U$. 
	Note that $B$ contains $O(\sqrt{|U|})$ vertices  because the weighted width of the tree decomposition of $\contractt{G}$ is $O(\sqrt{|U|})$. 
	Therefore, the weight of $B$ increases by $O(\sqrt{|U|})$ after replacing vertices of $\contractt{G}$ with vertices of $\contract{G}$. 
	Thus, the weight of a bag consisting of vertices of $Q$ is  $O(\sqrt{|U|})$. 
	
	Now we show that the weight of a bag containing a vertex of $Q^{\mathsf c}$ is  $O(\sqrt{|U|})$. 
	Every vertex $v\in Q^{\mathsf{c}}$ is a contraction of simple paths or simple cycles of length at most $O(1)$ in $G$. 
	In particular, its weight in $\contract G$ is $O(1)$. 
	Since a bag containing a vertex of $Q^{\mathsf{c}}$ consists of at most two vertices of $Q^{\mathsf{c}}$ and at most two vertices
	of $Q$, the weight of the bag is $O(\sqrt{|U|})$.

	\subsection{Computing a Tree Decomposition of Small Weighted Width}\label{sec:construction}
We can compute a tree decomposition of weighted treewidth $O(\sqrt{|U|})$ of $\contract{G}$ using the 
algorithm proposed by de Berg et al.~\cite{de2020framework},
which computes a tree decomposition of weighted width $O(w)$ of $\contract{G}$ assuming that the weighted treewidth of $\contract{G}$ is $w$. 
This algorithm runs in $2^{O(w)}(n+m)$ time. 
This algorithm works on the graph itself even if  the geometric representation is unknown. 
In this section, we briefly describe how the algorithm proposed by de Berg et al.~\cite{de2020framework} for computing such a tree decomposition
works.

%

A $\emph{blowup}$ of a vertex $v$ by an integer $t$ results in a graph where we replace a single vertex $v$ into a clique of size $t$, in which 
we connect every vertex to the neighbor vertices of $v$. We denote the set of vertices in the clique by $\mathcal{B}(v)$.
For the graph $\contract{G}$, we construct an unweighted graph $\contractb{G}$ by blowing up each vertex $v$ by an integer $\lceil g(v)\rceil$ where $g(v)=\log{|P(v)|}+1$.
We compute a tree decomposition of $\contractb{G}$, denoted by $(T_{\mathcal B}, \sigma_{\mathcal B})$.
Then we compute a tree decomposition of $\contract{G}$ by using the same tree layout of $T_{\mathcal B}$.
In particular, we add a vertex $v\in \contract{G}$ to a bag if and only if the bag in $T_{\mathcal B}$ contains all 
vertices of $\mathcal{B}(v)$.
The Lemma~2.9 and Lemma~2.10 of~\cite{de2020framework} prove the correctness of this algorithm.

The remaining step is computing a tree decomposition of $\contractb{G}$.
We apply a constant-factor approximation algorithm for computing an optimal tree decomposition~\cite{PBook}. 
By Theorem~\ref{pc_wtw} and Lemma 2.9 of~\cite{de2020framework}, the treewidth of $\contractb{G}$ is $O(\sqrt{|U|})$.
Therefore, the algorithm returns a tree decomposition of $\contractb{G}$ of width  $O(\sqrt{|U|})$. 
This algorithm takes a $2^{O(tw(\contractb{G}))}n=2^{O(\sqrt{|U|})}(n+m)$ time, and therefore, 
we can compute a tree decomposition of $\contract{G}$ of weighted width $O(\sqrt{|U|})$ without using the geometric representation of the graph.
This completes the proof of Theorem~\ref{pc_wtw}.

\section{Computing a Feedback Vertex Set}
In this section, we present a $2^{O(\sqrt{k})}(n+m)$-time algorithm for finding a feedback vertex set of size $k$ for the
intersection graph $G[F]=(V,E)$ of similarly sized fat objects in the plane. 

To do this, we first show that $G$ has $O(k)$ vertices of degree at least three if it has a feedback vertex set of size at most $k$ in
Section~\ref{sec:tw-yes}. 
Therefore, by Theorem~\ref{pc_wtw}, 
for a greedy partition $\mathcal{P}$ of $G$, 
the weighted treewidth of $\contract{G}$ is $O(\sqrt{k})$ if $(G,k)$ is a \textbf{yes}-instance.

To use this fact, we first compute such a 
greedy partition $\mathcal{P}$ in $O(n+m)$ time using the algorithm in~\cite{de2020framework}.
Then we check if the weighted treewidth of $\contract{G}$ is $O(\sqrt{k})$ in $2^{O(\sqrt{k})}(n+m)$ time.
If it is $\omega(\sqrt{k})$, we conclude that $(G,k)$ is a \textbf{no}-instance immediately. 
Otherwise, we compute a feedback vertex set of size $k$, if it exists, using dynamic programming on a
tree decomposition of $\contract{G}$ of weighted width $O(\sqrt{k})$. 
De Berg et al.~\cite{de2020framework} presented a dynamic programming algorithm
for \fvs which runs in $2^{O(\sqrt{w})}(n+m)$ time, where $w$ is the weighted treewidth of  $\contract{G}$.
We can use this algorithm, but for completeness, we briefly describe this algorithm in Section~\ref{sec:dp}.

\subsection{Weighted Treewidth for a \textbf{Yes}-Instance}\label{sec:tw-yes}

We first show that $G$ has $O(k)$ vertices of degree at least three if it has a feedback vertex set of size $k$. 
We consider a partition of the plane into axis-parallel squares (called \emph{cells}) with diameter $1/\sqrt{2}$,
which we call the \emph{grid}. Recall that we let the smallest diameter of the objects defining $G$ be one.
For a cell $\sigma$, let $P(\sigma)$ be the set of the centers of the objects defining $G$. 
By definition, $P(\sigma)$ forms a clique in $G$. 
We say a grid cell is a \emph{neighbor} of another grid cell if the smallest distance between two points from the two cells is at most $2\alpha\gamma$,
where $\alpha$ is the measure for the fatness of the objects of $F$, and $\gamma$ is the ratio of the largest and the smallest diameter of the objects. 
We say a grid cell $\sigma$ is \emph{heavy} if $P(\sigma)$ consists of at least three points, and \emph{light}, otherwise. 
Let $P_h$ denote the set of vertices contained in the
heavy cells, and let $P_\ell$ denote the set of vertices contained in the light cells. 

The following lemma implies that $\contract{G}$ has a weighted treewidth of $O(\sqrt{k})$ 
if it has a feedback vertex set of size $k$. 
The proof of the following lemma is inspired by the proof of~{[Lemma 9.1, \cite{PBook}]}.
Let $G'$ be the graph obtained from $G$ by removing all degree-1 vertices. 
The weighted treewidth of $G'$ is at most the weighted treewidth of $G$. 
Also, $(G,k)$ is a \textbf{yes}-instance of \textsc{feedback vertex set} if and only if
$(G',k)$ is a \textbf{yes}-instance of \textsc{feedback vertex set}.

\begin{lemma}\label{lem:fvs-num}
	$G'$ has $O(k)$ vertices of degree at least three if $(G',k)$ is a \textbf{yes}-instance of \textsc{feedback vertex set},
	where $G'$ is the graph obtained from $G$ by removing all degree-1 vertices. 
\end{lemma}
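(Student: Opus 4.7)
The plan is to fix a feedback vertex set $S$ of $G'$ with $|S| \leq k$ and work with the forest $F = G' - S$. I may assume that $G'$ has minimum degree at least $2$, either by interpreting the degree-$1$ removal as iterative or by noting that a short cleanup does not affect the feedback vertex set problem. I then decompose $|U| = |U \cap S| + |U \cap P_h \setminus S| + |U \cap P_\ell \setminus S|$ and bound each piece separately; the first is at most $k$.

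For $|U \cap P_h \setminus S|$, I would use the clique structure of heavy cells. Every heavy cell $\sigma$ has $|P(\sigma)| \geq 3$ vertices forming a clique in $G'$, so $S$ must contain at least $|P(\sigma)| - 2$ of them; otherwise three of the remaining vertices would form a triangle in $F$. Summing over heavy cells simultaneously bounds the number of heavy cells by $k$ and $|P_h \setminus S|$ by $2k$, giving $|P_h| \leq 3k$ and in particular $|U \cap P_h \setminus S| \leq 2k$.

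For $|U \cap P_\ell \setminus S|$, the key geometric observation is that each $s \in S$ has only $O(1)$ neighbors in $P_\ell$: the neighbors of $s$ in $G$ are contained in $\sigma_s$ together with a constant number of nearby cells (by fatness and similar-sizedness), and each light cell contributes at most two vertices. Hence $|E(S, P_\ell)| = O(k)$. I split such a vertex $v$ by $\deg_F(v)$. If $\deg_F(v) \leq 2$, then since $\deg_{G'}(v) \geq 3$, $v$ has at least one neighbor in $S$, so at most $O(k)$ such $v$ exist. If $\deg_F(v) \geq 3$, then $v$ is a branching vertex of $F$, and the standard forest inequality $|B(F)| \leq |L(F)|$ applies. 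The leaves of $F$ split into those in heavy cells, bounded by $|P_h| = O(k)$, and those in light cells, each of which has a neighbor in $S$ (because it has $\deg_F = 1$ while $\deg_{G'} \geq 2$) and is therefore accounted for by $|E(S, P_\ell)| = O(k)$.

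Putting the three estimates together yields $|U| \leq k + 2k + O(k) = O(k)$. The main step to justify carefully is the geometric claim that each $s \in S$ has only a constant number of neighbors in light cells, as this is what turns the local cell-by-cell accounting into the global $O(k)$ bound; the remaining ingredients are the clique lower bound from heavy cells and the elementary forest inequality relating branching vertices to leaves.
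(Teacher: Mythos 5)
Your proof is correct and follows essentially the same route as the paper's: bound $|P_h|$ by $3k$ via the clique structure of heavy cells, then bound the degree-$\geq 3$ vertices in light cells by combining the $O(1)$ bound on the number of light-cell neighbours of any vertex with the forest inequality (branching vertices at most leaves) applied to the graph minus the feedback vertex set. If anything, your explicit handling of vertices whose forest-degree is at most two but whose $G'$-degree is at least three (charging them to edges into $S$) is spelled out more carefully than in the paper, which charges everything to leaves of the forest.
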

\begin{proof}
	If $G'$ has a feedback vertex set of size $k$, the number of heavy cells is at most $k$, and the size of $P_h$ is at most $3k$.
	This is because all but two of the vertices contained in the same cell must be contained in a feedback vertex set of $G$.  
	In the following, we show that the number of vertices of $P_\ell$ of degree at least three is at most $O(k)$. 
	
	To do this, consider the subgraph $G_\ell$ of $G'$ induced by $P_\ell$. We first observe that every vertex of $G_\ell$ has degree at most $4(\alpha\gamma)^2$ in $G_\ell$. (But note that its degree in $G'$ might be larger than $4(\alpha\gamma)^2$.) Moreover, $G_\ell$ also has a feedback vertex set of size $k$. 
	Let $X$ be a feedback vertex set of $G_\ell$ of size $k$, and let $Y$ be the set of vertices of $G_\ell$ not contained in $X$. 
	By definition, the subgraph $G[Y]$ of $G_\ell$ induced by $Y$ is a forest. 
	The number of vertices of degree at least three is linear in the number of leaf nodes of the trees.  
	Since $G'$ has no degree-1 vertex, 
	a leaf node of $G[Y]$ is incident to a vertex of $X\cup P_h$ in $G'$. 
	Each vertex in $X\cup P_h$ is incident to at most $4(\alpha\beta)^2$ vertices of $Y$. 
	Since $|X\cup P_h|$ is $O(k)$, the total number of vertices of leaf nodes is $O(k)$,
	and thus, the total number of vertices of degree at least three of $G_\ell$ (and $G'$) is $O(k)$. 
\end{proof}

\begin{lemma}
	For a greedy partition $\mathcal P$, the weighted treewidth of $\contract{G}$ is $O(\sqrt{k})$  if $(G,k)$ is a \textbf{yes}-instance of \textsc{feedback vertex set}.
\end{lemma}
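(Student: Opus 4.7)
The plan is to combine Theorem~\ref{pc_wtw} with Lemma~\ref{lem:fvs-num}, with a small reduction step to handle degree-one vertices. First, I would pass from $G$ to the graph $G'$ obtained by iteratively deleting all degree-one vertices. Since such vertices lie on no cycle, $(G,k)$ and $(G',k)$ are equivalent instances of \fvs. Lemma~\ref{lem:fvs-num} applied to $G'$ therefore yields $|U(G')|=O(k)$, where $U(G')$ is the set of vertices of $G'$ of degree at least three.

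Next, I would apply Theorem~\ref{pc_wtw} to $G'$ together with a greedy partition $\mathcal{P}'$ of $G'$. Since $|U(G')|=O(k)$, the theorem immediately yields a tree decomposition of the $\mathcal{P}'$-contraction of $G'$ of weighted width $O(\sqrt{|U(G')|})=O(\sqrt{k})$.

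The last step is to transfer this bound back to $\contract{G}$ for the greedy partition $\mathcal{P}$ of the original graph $G$. I would extend the tree decomposition of the contraction of $G'$ by absorbing the vertices of $V(G)\setminus V(G')$, which form pendant trees hanging off single vertices of $V(G')$. A pendant tree that fits entirely within a single class $P_i$ of $\mathcal{P}$ only changes that class's weight $\lceil\log|P_i|\rceil+1$ by a logarithmic additive amount, and a pendant tree that spans several classes of $\mathcal{P}$ appears in $\contract{G}$ as a chain of constant-weight vertices, which can be added to the tree decomposition as a path of leaf bags of constant weight, exactly as in Cases~3 and~4 of the construction in Section~\ref{sec:width-proof}.

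I expect the main obstacle to lie precisely in this transfer step. A greedy partition $\mathcal{P}$ of $G$ does not canonically restrict to a greedy partition of $G'$, so one must either construct $\mathcal{P}$ and $\mathcal{P}'$ compatibly or argue directly that the weighted treewidth of $\contract{G}$ is dominated by that of the contraction of $G'$, with the pendant structure of $V(G)\setminus V(G')$ contributing only weight that is absorbed into the $O(\sqrt{k})$ budget. Verifying the latter requires a careful accounting of how much additional weight each bag of the lifted decomposition receives from the pendant classes.
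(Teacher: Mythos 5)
Your proposal follows the same route the paper intends: the paper states this lemma without any proof, relying on the reader to combine Lemma~\ref{lem:fvs-num} with Theorem~\ref{pc_wtw}. In fact you are more careful than the paper here, because you correctly flag the one genuine issue in that combination: Lemma~\ref{lem:fvs-num} bounds the number of degree-$\geq 3$ vertices of $G'$ (the graph after degree-one peeling), whereas Theorem~\ref{pc_wtw} is stated in terms of the degree-$\geq 3$ vertices of $G$ itself, and these quantities can differ by $\Theta(n)$ (a caterpillar has a feedback vertex set of size $0$ but $\Theta(n)$ vertices of degree three). The paper silently elides this mismatch; your transfer step is the right fix.

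Two details of that step need tightening. First, the contracted images of $V(G)\setminus V(G')$ form pendant \emph{trees} in $\contract{G}$, not chains, so Cases~3 and~4 of Section~\ref{sec:width-proof} do not apply verbatim; one needs the (easy) extension that a pendant tree of constant-weight vertices can be folded into a tree decomposition at constant extra weight per bag. Second, your claim that these pendant vertices have constant weight is not automatic and is in fact the crux: it holds because every clique of size at least three lies in the $2$-core and therefore survives into $G'$, so each class of $\mathcal{P}$, being a union of at most $\kappa$ cliques, loses at most $2\kappa=O(1)$ vertices under the peeling. Hence a class disjoint from $V(G')$ has $O(1)$ vertices and weight $O(1)$, and a surviving class's weight changes by only an additive constant. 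This same observation dissolves your worry about compatibility of $\mathcal{P}$ and $\mathcal{P}'$: instead of taking a fresh greedy partition of $G'$, restrict $\mathcal{P}$ to $V(G')$ (each class stays connected because only vertices of current degree at most one are peeled, and the maximum degree of the contraction can only decrease), run the argument of Theorem~\ref{pc_wtw} on that restriction, and lift the resulting decomposition back to $\contract{G}$ exactly as in the last paragraph of Section~\ref{sec:width-proof}. With these points supplied your argument is complete and is, if anything, a more honest proof than the one the paper omits.
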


\subsection{Dynamic Programming}\label{sec:dp}
By combining Lemma~\ref{lem:fvs-num} and Theorem~\ref{pc_wtw}, we can obtain a tree decomposition of weighted width $O(\sqrt{k})$ if a geometric intersection graph $G$
has a feedback vertex set of size $k$. De Berg et al.~\cite{de2020framework} showed how to compute a feedback vertex set of size $k$
in $2^{O(\sqrt{w})} \poly n$ time, where $w$ denotes the weighted treewidth of $\contract{G}$. 
In this section, we briefly describe how this algorithm works. 

Given a tree decomposition $(T,\beta)$ of a graph $G'$, algorithm for solving \textsc{Feedback Vertex Set} 
computes for each bag $\beta(t)$ and each subset $S\subseteq \beta (t)$ together with the connectivity information $\eta$ of the vertices of $S$, the 
maximum size $c[t, S, \eta]$ of a feedback vertex set $\hat{S}$ of $G[t]$ with $\hat{S} \cap \beta(t) = S$ satisfying the connectivity information $\eta$, 
where $G[t]$ denotes the subgraph of $G$ 
induced by the vertices appearing in bags in the subtree of $T$ rooted at $t$. 

Let $t$ be a node of a tree decomposition $(T,\beta)$ of $\contract{G}$. Recall that 
each vertex $v$ of $\beta(t)$ corresponds to the partition class $P(v)$. Let $X_t = \bigcup_{v\in \beta(t)} P(v)$.
A feedback vertex set contains at least $|C|-2$ vertices of a clique $C$. Therefore, since from each partition class 
we can exclude $O(1)$ vertices (at most two vertices from each clique), the number of subsets
$\hat{S}$ that need to be considered is at most 
\[\prod_{v\in \beta (t)} |P(v)|^2 = \exp\biggl( \sum_{C\in \beta (t)} 2\log |P(v)|\biggr) = 2^{O(\sqrt{k})}.\] 
Therefore, we can track connectivity of these subsets by applying the rank-based approach of~\cite{BODLAENDER201586}, which allows us to keep the number of equivalence relations
considered single-exponential in $O(\sqrt{k})$.  
Assuming that we have $c[t', \cdot,\cdot]$ for all descendants $t'$ of $t$ in $T$ and their connectivity information, we can compute $c[t, S, \eta]$ in $2^{O(\sqrt{k})}$ time.
Since the number of nodes in the tree decomposition is $O(kn)$, the total running time of the dynamic programming algorithm is 
$2^{O(\sqrt{k})}(n+m)$. 
Therefore, we have the following theorem. 
\begin{theorem}
	Given a intersection graph $G$ of similarly sized fat objects in the plane (without its geometric representation), we can compute a feedback vertex set of size $k$ in $2^{O(\sqrt {k})}(n+m)$ time, if it exists.
\end{theorem}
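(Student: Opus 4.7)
The theorem is essentially a consequence of the three ingredients developed in the preceding sections, and the plan is to assemble them. First, I would preprocess $G$ by iteratively removing degree-$1$ vertices (which cannot participate in any cycle, so removing them preserves the instance) to obtain $G'$ in $O(n+m)$ time; then invoke the de~Berg et al.~\cite{de2020framework} algorithm to compute a greedy partition $\mathcal{P}$ of $G'$ in $O(n+m)$ time. Any feedback vertex set of $G'$ is a feedback vertex set of $G$ and conversely, so it suffices to solve the instance on $G'$.

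Second, I would combine Lemma~\ref{lem:fvs-num} with Theorem~\ref{pc_wtw}: if $(G,k)$ is a yes-instance, then $G'$ has $O(k)$ vertices of degree at least three, so $\contract{G'}$ has weighted treewidth $O(\sqrt{k})$, and the constructive algorithm of Section~\ref{sec:construction} returns a tree decomposition $(T,\beta)$ of $\contract{G'}$ of weighted width $O(\sqrt{k})$ in $2^{O(\sqrt{k})}(n+m)$ time. If the constant-factor approximation reports that the weighted treewidth exceeds this bound, we may immediately declare a no-instance.

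Finally, I would run the dynamic program of Section~\ref{sec:dp} on $(T,\beta)$. For each bag $\beta(t)$ and each candidate intersection $S$ of a feedback vertex set $\hat{S}$ with $X_t = \bigcup_{v\in\beta(t)}P(v)$, together with a connectivity pattern $\eta$ of the surviving forest on $S$, we compute $c[t,S,\eta]$ bottom-up. The key observation is that any feedback vertex set must exclude at most two vertices per clique in each partition class $P(v)$, so the number of candidate intersections per bag is bounded by $\prod_{v \in \beta(t)} |P(v)|^{O(1)} = 2^{O(\sqrt{k})}$ by the weighted-width bound. The main obstacle is keeping the number of tracked connectivity patterns $\eta$ single-exponential in $\sqrt{k}$, since a na\"ive enumeration would yield a super-exponential blowup; this is resolved by the rank-based approach of Bodlaender et al.~\cite{BODLAENDER201586}, which after each introduce/forget/join update reduces the tracked family of partitions to a representative basis of size $2^{O(\sqrt{k})}$ while preserving correctness. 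Combining $O(kn)$ bag updates at cost $2^{O(\sqrt{k})}$ each yields the claimed $2^{O(\sqrt{k})}(n+m)$ running time, and the resulting FVS of $\contract{G'}$ is lifted back to a FVS of $G$ by expanding partition classes in the standard way.
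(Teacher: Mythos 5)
Your proposal is correct and follows essentially the same route as the paper: compute a greedy partition, use Lemma~\ref{lem:fvs-num} together with Theorem~\ref{pc_wtw} to obtain (or rule out) a tree decomposition of weighted width $O(\sqrt{k})$, and then run the rank-based dynamic program of Section~\ref{sec:dp} over the $2^{O(\sqrt{k})}$ candidate intersections per bag. The only cosmetic difference is that you make the degree-$1$ removal and the lifting of the solution back from the contraction explicit, which the paper leaves implicit.
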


	\newpage
	\bibliography{paper}{}
	
\end{document}